\documentclass[journal,onecolumn]{IEEEtran}
\usepackage{booktabs}
\usepackage{amsmath}
\usepackage{amsmath}                
\usepackage{mathtools} 
\usepackage{amsthm}                 
\usepackage{amssymb}
\usepackage{thmtools}
\usepackage{kpfonts}
\usepackage[geometry]{ifsym}
\usepackage{dsfont}
\usepackage{multirow}
\usepackage[mathscr]{euscript}
\usepackage{graphicx}
\usepackage{color}
\usepackage[inline]{enumitem}
\usepackage{framed}
\usepackage{float}
\usepackage{longtable}
\usepackage{cite}
\usepackage{enumitem}
\usepackage{caption} 
\usepackage{subcaption} 
\usepackage{xcolor}

\usepackage[colorlinks=true, citecolor=blue, linkcolor=blue]{hyperref}       
\usepackage[font=itshape]{quoting}
\usepackage{lscape}
\usepackage{makecell}
\usepackage[section]{placeins}

\usepackage{listings}
\lstset{
basicstyle=\small\ttfamily,
columns=flexible,
breaklines=true
}
\usepackage{comment}
\usepackage{framed} 
\usepackage{nomencl} 
\makenomenclature

\declaretheoremstyle[%
  headfont=\bfseries,%
  headpunct={:},%
  notefont=\normalfont\bfseries,%
  notebraces={--~}{},
    qed=$\blacksquare$,
]{definitionstyle}
\theoremstyle{definition}
\declaretheorem[style=definitionstyle,name=Definition]{defn}
\declaretheorem[style=definitionstyle,name=Theorem]{thm}

\theoremstyle{definition}

\theoremstyle{plain}
\theoremstyle{remark}




\begin{document}
%
\title{Enhancing Spatio-Temporal Resolution of Process-Based Life Cycle Analysis with Model-Based Systems Engineering \&
Hetero-functional Graph Theory}

\author{Niraj Gohil, Nawshad Haque, Amgad Elgowainy, Amro M. Farid}


\date{March 28 2024}
\maketitle

\begin{abstract}
Life cycle analysis (LCA) has emerged as a vital tool for assessing the environmental impacts of products, processes, and systems throughout their entire lifecycle. It provides a systematic approach to quantifying resource consumption, emissions, and waste, enabling industries, researchers, and policymakers to identify hotspots for sustainability improvements. By providing a comprehensive assessment of systems, from raw material extraction to end-of-life disposal, LCA facilitates the development of environmentally sound strategies, thereby contributing significantly to sustainable engineering and informed decision-making. Despite its strengths and ubiquitous use, life cycle analysis has not been reconciled with the broader literature in model-based systems engineering and analysis, thus hindering its integration into the design of complex systems more generally. This lack of reconciliation poses a significant problem, as it hinders the seamless integration of environmental sustainability into the design and optimization of complex systems. Without alignment between life cycle analysis (LCA) and model-based systems engineering (MBSE), sustainability remains an isolated consideration rather than an inherent part of the system’s architecture and design. The original contribution of this paper is twofold. First, the paper reconciles process-based life cycle analysis with the broader literature and vocabulary of model-based systems engineering and hetero-functional graph theory. It ultimately proves that model-based systems engineering and hetero-functional graph theory are a formal generalization of process-based life cycle analysis. Secondly, the paper demonstrates how model-based systems engineering and hetero-functional graph theory may be used to enhance the spatio-temporal resolution of process-based life cycle analysis in a manner that aligns with system design objectives.  
\end{abstract}

\section{Introduction}
Life cycle analysis (LCA) has emerged as an indispensable tool for evaluating the environmental impacts of products, processes, and systems throughout their entire life cycle\cite{Curran1994}. It provides a systematic approach to quantifying resource consumption, emissions, and waste, enabling industries, researchers, and policymakers to identify hotspots for sustainability improvements. By offering a comprehensive assessment of systems from raw material extraction to end-of-life disposal, LCA supports the development of environmentally sound strategies, contributing significantly to sustainable engineering and decision-making\cite{Rebitzer2004}.  Over time, LCA has evolved to address challenges such as resource scarcity and environmental degradation.  It has also supported boundary identification, allocation methods, and impact assessment techniques \cite {Shi2015}.  Emerging innovations, including hybrid models and real-time data integration, further enhance LCA's robustness, enabling stakeholders to navigate the complexities of sustainable engineering while fostering innovation and aligning with global sustainability goals. This holistic perspective empowers stakeholders to identify processes that consume resources and produce emissions, and then implement targeted interventions that drive sustainability forward \cite{daCosta2024}.

Despite its strengths and widespread use, life cycle analysis has not been reconciled with the broader literature in model-based systems engineering and analysis \cite{Madni2018ModelbasedSE}, thereby hindering its integration into the design of complex systems more generally. This lack of reconciliation poses a significant problem, as it hinders the seamless integration of environmental sustainability into the design and optimization of complex systems. Without alignment between life cycle analysis (LCA) and model-based systems engineering (MBSE), sustainability remains an isolated consideration rather than an inherent part of the system's architecture and design\cite{
Mooij2012}.  This isolation leads to missed opportunities to identify and address environmental trade-offs during the critical early stages of design, where decisions have the greatest impact on a system's life cycle. Additionally, the inability to evaluate environmental performance alongside technical, operational, and economic objectives results in fragmented decision-making, where environmental considerations may conflict with other system goals. Ultimately, this disconnect limits the development of truly holistic solutions that balance sustainability with the multifaceted demands of modern complex systems, hindering progress toward resilient and environmentally sound innovations.

\vspace{-0.1in}
\subsection{Original Contribution}
The original contribution of this paper is twofold.  First, the paper seeks to reconcile process-based life cycle analysis with the broader literature and vocabulary of model-based systems engineering\cite{Delligatti2013SysMLDA} and hetero-functional graph theory\cite{Schoonenberg:2019:ISC-BK04, Farid:2022:ISC-J51}. Doing so reveals the specific limiting conditions upon which model-based systems engineering (MBSE) and hetero-functional graph theory (HFGT) collapse to a life cycle analysis.  Therefore, this paper proves that model-based systems engineering and hetero-functional graph theory are a formal generalization of process-based life cycle analysis.  Consequently, this paper discusses how a life cycle analysis methodology can be extended when these specific limiting conditions are relaxed.  In particular, it shows how MBSE and HFGT address systems that a.) contain processes with a diversity of processing times, b.) exhibit dynamic behavior over a long simulation horizon, c.) store physical quantities, and d.) exhibit an explicit spatial distribution.  Secondly, the paper demonstrates how model-based systems engineering and hetero-functional graph theory may be used to enhance the spatio-temporal resolution of process-based life cycle analysis in a manner that aligns with system design objectives.  

\vspace{-0.1in}
\subsection{Paper Outline}
The remainder of the paper proceeds as follows.  Section \ref{Sec:Background} provides preliminary background on process-based life cycle analysis, model-based systems engineering, and hetero-functional graph theory.  Section \ref{Sec:III AnalysisExample} then reconciles these three modeling and analysis techniques.  Ultimately, it shows that MBSE and HFGT are a formal generalization of process-based life cycle analysis.  Sec. \ref{Sec:Discussion} then discusses how this MBSE and HFGT can be used to extend the capabilities of process-based life cycle analysis.  Finally, Sec. \ref{Sec:Conclusion} brings the work to a conclusion.  

\section{Background}\label{Sec:Background}
In order to support the analytical discussion in the following sections, this section provides preliminary background on process-based life cycle analysis in Sec. \ref{Sec:LCAIntro}, on model-based systems engineering in Sec. \ref{Sec:MBSEIntro}, and on hetero-functional graph theory in Sec. \ref{Sec:HFGTIntro}.  

\vspace{-0.1in}
\subsection{Process Based Life Cycle Analysis}\label{Sec:LCAIntro}
Life cycle analysis has a rich and well-regarded literature that spans decades of research, evaluating the environmental impacts of products, processes, and systems across their entire lifecycle \cite{klopffer2014}. It has been widely used in sectors such as manufacturing, energy, and transportation, where it helps design more sustainable systems and informs regulatory compliance, policy development, and corporate sustainability strategies \cite{Jaeger2018LCAIS}.  Among its various methodologies, process-based Life Cycle Analysis (LCA) stands out for its extensible and systematic approach. It breaks down the lifecycle of a product or system into a number of process stages such as raw material extraction, manufacturing, use, and disposal \cite{tillman_methodology}. The representation of systems in terms of their constituent processes allows for granular analysis of resource consumption, air and water emissions, and waste generation\cite{villeneuve2007waste}.  The quantification of these values serves to identify where to make the most important sustainability improvements, making it an indispensable tool for industries, researchers, and policymakers  \cite {toniolo2021lca}.

Process-based life cycle analysis presumes a process flow diagram that includes processes and their associated inputs and outputs \cite{hendrickson_economic_io_lca}. An example process flow diagram is shown in Fig. \ref{fig: Process Flow Diagram} (and is elaborated further in Sec. \ref{Sec:LCAExample}).  These processes are connected to each other with directed arrows in parallel and serial arrangements to describe the behavior of the system or product as a whole. Importantly, each arrow is labeled with a numerical weight that defines the amount of output produced relative to the amount of input consumed.  Each process represents a distinct stage in the lifecycle, such as material extraction, manufacturing, use, or disposal. Meanwhile, the connecting arrows indicate the transfer of resources or emissions between these process stages. The numerical weights on the arrows often represent stoichiometric ratios of various chemical/conversion processes. By clearly defining the interactions between processes, inputs, and outputs, the process flow diagram facilitates the identification of key leverage points for sustainability improvements within the system\cite{mathews_lca_decisions}.

\begin{figure}[H]
\centering
\includegraphics[width=1\linewidth]{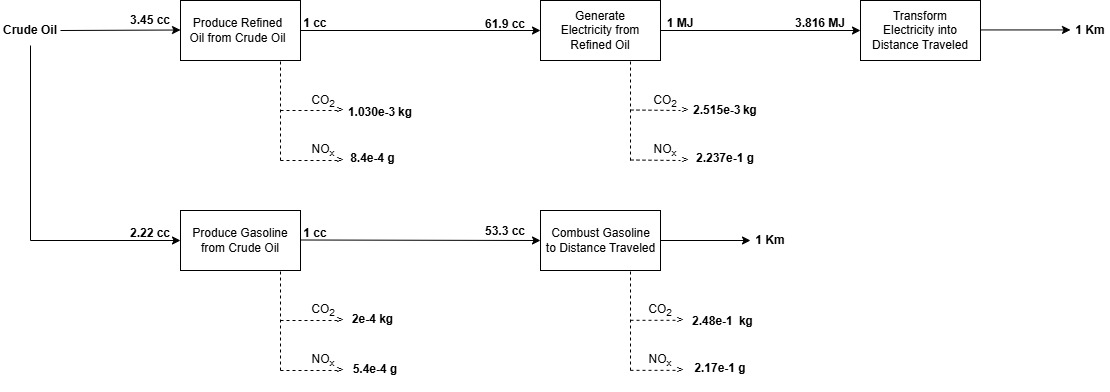}
\caption{A process flow diagram illustrating the conversion of crude oil into vehicle motion for electric vehicles (EVs) and internal combustion vehicles (ICVs). It outlines the oil refining, electricity generation, gasoline production, electro-mechanical conversion, and combustion processes required for vehicle motion. CO$_2$ and NO$_X$ emissions have been identified at each stage as relevant environmental aspects.}
\label{fig: Process Flow Diagram}
\end{figure}

The visual representation of a process flow diagram provides the basis for quantitative analysis of the flows of matter, energy, and emissions using linear algebra. Given a set of processes P in the process flow diagram, a vector X describes how much these processes are executed. In the case of discrete processes, X corresponds to the number of times the process is executed. In the case of continuous processes, X corresponds to how many time steps the process is executed. Next, each process is assigned a single primary product. The associated vector Y measures the amount of these products produced when the processes are executed once (or for one time step). Next, a square technology matrix A is defined. Its elements $a_{ij}$ are coefficients that detail the amount of product $y_i$ produced when process $p_j$ is executed once (or for one time step). Note that when a process $p_j$ consumes a certain product $y_i$, the associated element $a_{ij}$ takes on a negative value by convention. Also note that the number of processes and products is assumed to be the same. Therefore, the technology matrix A is always square. Consequently, 
\begin{align}
Y=AX
\label{Equation 1}
\end{align}
Additionally, a vector of environmental aspects (i.e., environmental emissions or resource impacts) $E$ is defined.  Then, an environmental matrix $B$ is defined.\cite{mathews_lca_decisions} Its elements $b_{kj}$ are coefficients that detail the amount of environmental aspect produced $e_k$ as a result of executing process $p_j$ once. Again, when a process $p_j$ \emph{consumes} a certain environmental aspect $e_i$, the associated element $b_{ij}$ takes on a negative value by convention.\cite{mathews_lca_decisions}.  Therefore,
\begin{align}
E=BX
\label{Equation 2}
\end{align}
Finally, the process-based life cycle analysis is executed by solving for $E$ in terms of $Y$ \cite{mathews_lca_decisions}.
\begin{align}
E = BA^{-1}Y
\label{
Equation 3}
\end{align}

\subsection{Model-Based Systems Engineering}\label{Sec:MBSEIntro}
In the meantime, Model-Based Systems Engineering (MBSE) has emerged as a transformative approach for graphically modeling large, complex systems throughout their lifecycle\cite{Bajaj2011SLIMCM}. It uses the Systems Modeling Language (SysML) to describe a system's requirements, structure, behavior, and their interdependencies\cite{Kaslow2016CubeSatMB}. MBSE has become increasingly important due to its ability to enhance traceability, improve stakeholder communication, and provide real-time analysis capabilities.  These characteristics have made it particularly valuable in managing complexity in fields such as aerospace, automotive, defense, telecommunication, and energy \cite {Madni2018ModelbasedSE}. MBSE has been widely used to design complex systems, optimize operations, and ensure compliance with regulatory requirements by integrating the applicable required metrics into system design, operation, and maintenance \cite{Dean2012ModelBasedSE}.

While a complete introduction to Model-Based Systems Engineering (MBSE) and SysML is beyond the scope of this paper, the essential elements of two key SysML diagrams are introduced to understand how to represent a system’s architecture\cite{Crawley2015SystemAS}. The first diagram, called the Block Definition Diagram (BDD), is used to model a system’s form or its component parts.  An example BDD is shown in Fig.\ref{fig: BDD to Oil Vehicle Motion}
 (and is elaborated further in Sec.\ref{Sec:III AnalysisExample}). In the context of this paper, a BDD contains (at a minimum): 

\begin{itemize}
\item \textbf{Blocks:} that represent system elements, components, or subsystems.
\item \textbf{Attributes:} that represent characteristics or properties of each block.
\item \textbf{Operations:} that represent functions, activities, or processes that a block can perform.  Each combination of an operation in a block describes a capability.  
\end{itemize}

\begin{figure}[H]
\centering
\includegraphics[
width=1\linewidth]{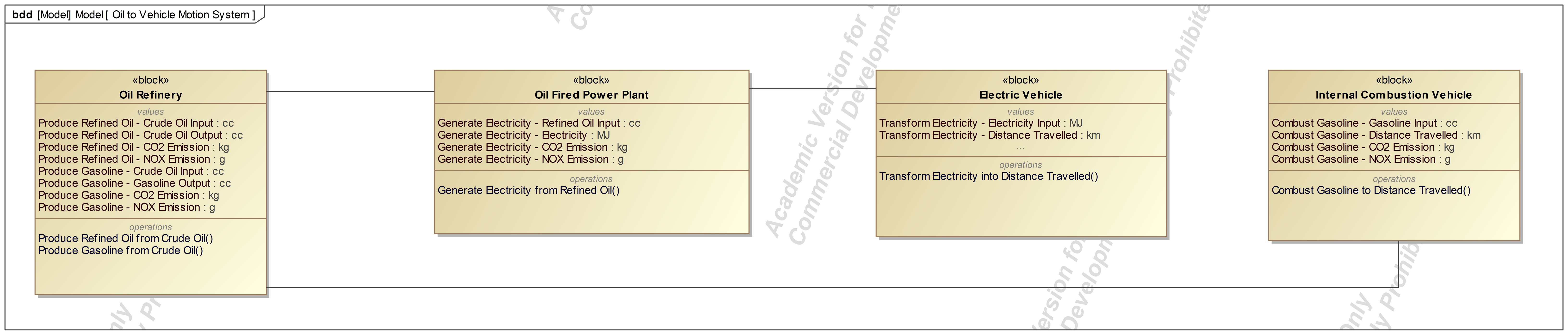}
\caption{A block definition diagram illustrating the Oil to Vehicle Motion System model, breaking down the key components and flow properties. The system encompasses the oil refinery, oil-fired power plant, electric vehicle (EV), and internal combustion engine (ICV). The diagram outlines each block's input and output parameters, including crude oil, refined oil, electricity, gasoline, emissions (CO$_2$ and NO$_x$), and vehicle range in kilometers, showcasing the interconnected pathways for energy conversion and emissions output.}
\label{fig: BDD to Oil Vehicle Motion}
\end{figure}

The second diagram, called the Activity Diagram (ACT), is used to model a system's function in terms of its constituent activities. An example ACT is shown in Fig.~\ref{fig: Activity Diagram Propel Vehicle from Crude Oil} (and is elaborated further in Sec.~\ref {Sec:III AnalysisExample}). In the context of this paper, an ACT contains (at a minimum):

\begin{figure}[H]
\centering
\includegraphics[width=1\linewidth]{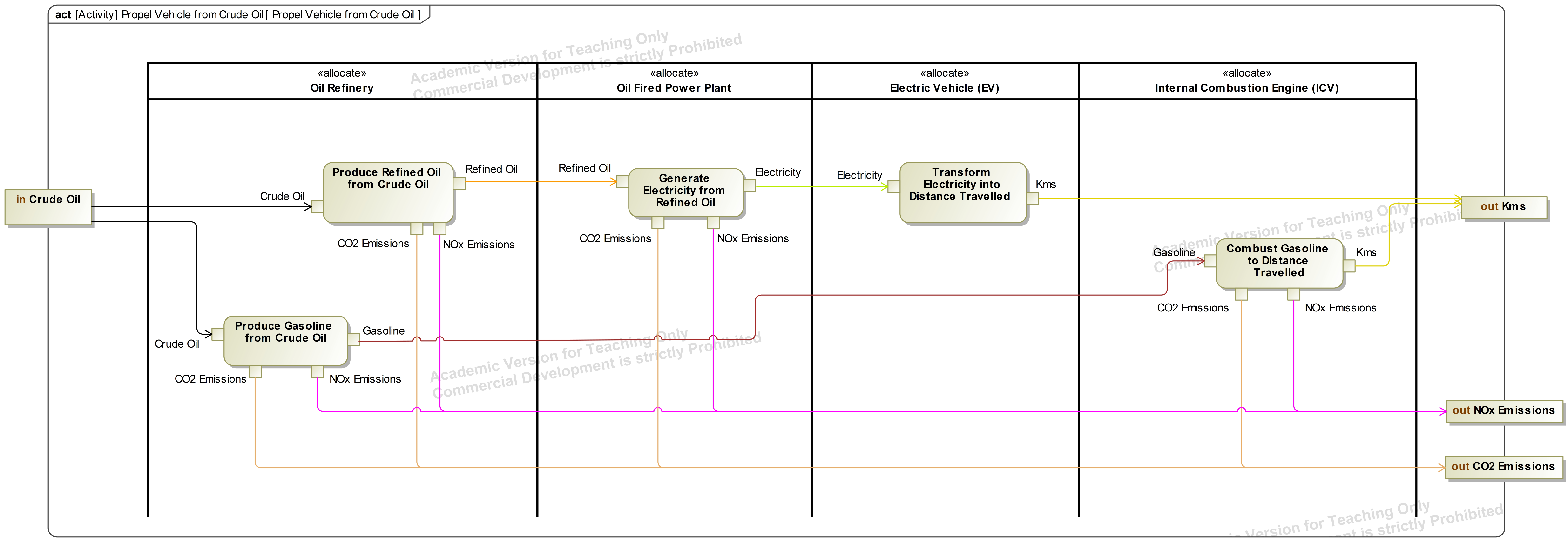}
\caption{An activity diagram with swim lanes illustrating the refining of oil, the production of gasoline, the generation of electricity, the transformation of electricity into motion, and the combustion of gasoline for motion.  Each swim lane represents a system component responsible for carrying out each action.  The activity diagram also shows crude oil as the input, and distance traveled,  CO$_2$ emissions, and NO$_X$ emissions as the three outputs.}
\label{fig: Activity Diagram Propel Vehicle from Crude Oil}
\end{figure}

\begin{itemize}
\item \textbf{Actions (or Activities)}: Represent specific tasks, operations, or processes within the system.  Note that the operations found in a BDD are often represented as actions in an ACT.  Similarly, there is no distinction between a process in a process flow diagram and an action in an ACT.  
\item \textbf{Transitions}: Arrows indicate the flow from one action to another, showing the sequential or parallel execution of processes.
\item \textbf{Swim Lanes}: Horizontal or vertical partitions that separate the activities performed by different system components or roles, clarifying responsibilities.  The presence of an action within a swim lane also describes a capability.  
\end{itemize}
Together, the activity diagram illustrates how the system's functions are executed to reveal its behavior.  

\subsection{Hetero-functional Graph Theory}\label{Sec:HFGTIntro}
While MBSE, and more specifically SysML, can graphically model large, complex systems, it does not have in-built functionality for conducting quantitative analysis.   Fortunately, hetero-functional graph theory (HFGT) provides an analytical means of translating graphical SysML models into mathematical models.  As shown in Fig. \ref{Fig:LFESMetaArchitecture}, this translation requires the HFGT meta-architecture stated in SysML.  
\begin{figure}[htbp]
\centering
\includegraphics[width=\textwidth]{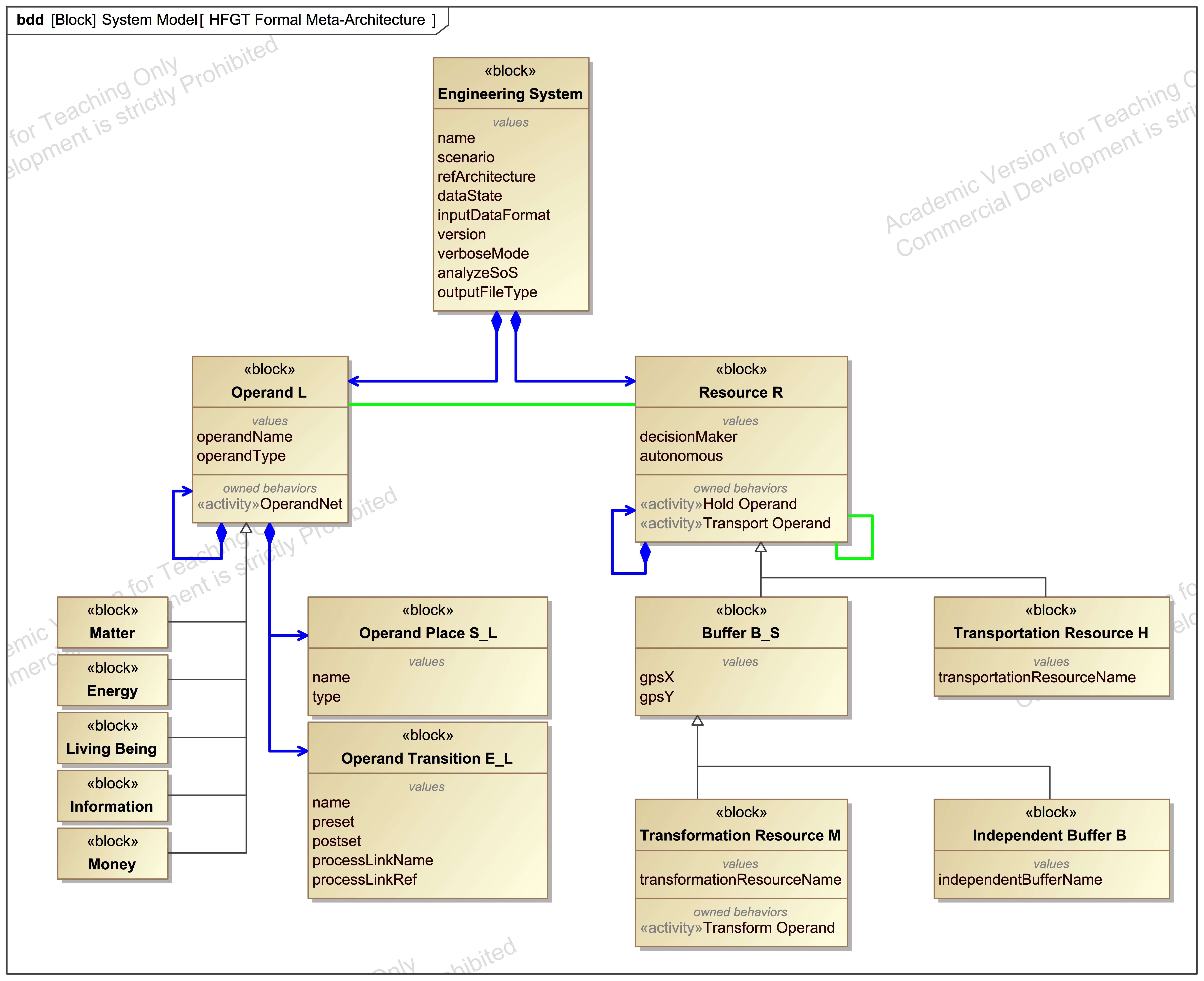}
\caption{A SysML Block Definition Diagram of the System Form of the Engineering System Meta-Architecture\cite{Schoonenberg:2019:ISC-BK04}.}
\label{Fig:LFESMetaArchitecture}
\end{figure}

The HFGT meta-architecture introduces several meta-elements whose definitions are formally introduced here:  
\begin{defn}[System Operand \cite{SE-Handbook-Working-Group:2015:00}]\label{Defn:D1 System Operand}
An asset or object $l_i \in L$ that is operated on or consumed during the execution of a process.
\end{defn}
\begin{defn}[System Process\cite{Hoyle:1998:00,SE-Handbook-Working-Group:2015:00}]\label{Defn:D2 System Process}
An activity $p \in P$ that transforms or transports a predefined set of input operands into a predefined set of outputs. 
\end{defn}
\begin{defn}[System Resource \cite{SE-Handbook-Working-Group:2015:00}]\label{Defn:D3 System Resource}
An asset or object $r_v \in R$ that facilitates the execution of a process.  
\end{defn}
\noindent Importantly, these meta-elements are organized around the universal structure of human language.  Namely, system resources $R$ serve as subjects, system processes $P$ serve as predicates, and operands $L$ serve as objects within the predicates.  

The system resources $R=M \cup B \cup H$ are classified into transformation resources $M$, independent buffers $B$, and transportation resources $H$.  Additionally, the set of ``buffers" $ B_S=M\cup B$ is introduced to support the discussion.  
\begin{defn}[Buffer\cite{Schoonenberg:2019:ISC-BK04,Farid:2022:ISC-J51}]\label{Defn:D4 Buffer}
A resource $r_v \in R$ is a buffer $b_s \in B_S$ if it is capable of storing or transforming one or more operands at a unique location in space.  
\end{defn}
Equally important, the system processes $P = P_\mu \cup P_{\bar{\eta}}
$ are classified into transformation processes $P_\mu$ and refined transportation processes $P_\eta
$.  The latter arises from the simultaneous execution of one transportation process and one holding process.  Finally, hetero-functional graph theory emphasizes that resources are capable of one or more system processes to produce a set of ``capabilities"\cite{Schoonenberg:2019:ISC-BK04}.
\begin{defn}[Capability\cite{Schoonenberg:2019:ISC-BK04,Farid:2022:ISC-J51,Farid:2016:ISC-BC06}]\label{Defn:D5 Capability}
An action $e_{wv} \in {\cal E}_S$ (in the SysML sense) defined by a system process $p_w \in P$ being executed by a resource $r_v \in R$.  It constitutes a subject + verb + operand sentence of the form: ``Resource $r_v$ does process $p_w
$".  
\end{defn}
\noindent The highly generic and abstract nature of these definitions has allowed HFGT to be applied to numerous application domains, including electric power, potable water, wastewater, natural gas, oil, coal, multi-modal transportation, mass-customized production, and personalized healthcare delivery systems.  For a more in-depth description of HFGT, readers are directed to past works\cite{Schoonenberg:2019:ISC-BK04,Farid:2022:ISC-J51,Farid:2016:ISC-BC06}.

Returning to Fig. \ref{Fig:LFESMetaArchitecture}, the engineering system meta-architecture stated in SysML must be instantiated and ultimately transformed into the associated Petri net model. To that end, the positive and negative hetero-functional incidence tensors (HFIT) are introduced to describe the flow of operands through buffers and capabilities.  
\begin{defn}[The Negative 3$^{rd}$ Order Hetero-functional Incidence Tensor (HFIT) $\widetilde{\cal M}_\rho^-$\cite{Farid:2022:ISC-J51}]\label{Defn:D6 HFIT -ve}
The negative hetero-functional incidence tensor $\widetilde{\cal M_\rho}^- \in \{0,1\}^{|L|\times |B_S| \times |{\cal E}_S|}$  is a third-order tensor whose element $\widetilde{\cal M}_\rho^{-}(i,y,\psi)=1$ when the system capability ${\epsilon}_\psi \in {\cal E}_S$ pulls operand $l_i \in L$ from buffer $b_{s_y} \in B_S$.
\end{defn} 
\begin{defn}[The Positive  3$^{rd}$ Order Hetero-functional Incidence Tensor (HFIT)$\widetilde{\cal M}_\rho^+$\cite{Farid:2022:ISC-J51}]\label{Defn:D7 HFIT +ve}
 The positive hetero-functional incidence tensor $\widetilde{\cal M}_\rho^+ \in \{0,1\}^{|L|\times |B_S| \times |{\cal E}_S|}$  is a third-order tensor whose element $\widetilde{\cal M}_\rho^{+}(i,y,\psi)=1$ when the system capability ${\epsilon}_\psi \in {\cal E}_S$ injects operand $l_i \in L$ into buffer $b_{s_y} \in B_S$.
\end{defn}
\noindent These incidence tensors are straightforwardly ``matricized" to form the 2$^{nd}$ Order Hetero-functional Incidence Matrix $M = M^+ - M^-$ with dimensions $|L||B_S|\times |{\cal E}|$. Consequently, the supply, demand, transportation, storage, transformation, assembly, and disassembly of multiple operands in distinct locations over time can be described by an Engineering System Net and its associated State Transition Function\cite{Schoonenberg:2022:ISC-J50}.
\begin{defn}[Engineering System Net\cite{Schoonenberg:2022:ISC-J50}]\label{Defn:8 ESN}
An elementary Petri net ${\cal N} = \{S, {\cal E}_S, \textbf{M}, W, Q\}$, where
\begin{itemize}
\item $S$ is the set of places with size: $|L||B_S|$,
\item ${\cal E}_S$ is the set of transitions with size: $|{\cal E}|$,
\item $\textbf{M}$ is the set of arcs, with the associated incidence matrices: $M = M^+ - M^-$,
\item $W$ is the set of weights on the arcs, as captured in the incidence matrices,
\item $Q=[Q_B; Q_E]$ is the marking vector for both the set of places and the set of transitions. 
\end{itemize}
\end{defn}
\begin{defn}[Engineering System Net State Transition Function\cite{Schoonenberg:2022:ISC-J50}]\label{Defn:9 ESN-STF}
The  state transition function of the engineering system net $\Phi()$ is:
\begin{equation}\label{CH6:eq:PhiCPN}
Q[k+1]=\Phi(Q[k],U^-[k], U^+[k]) \quad \forall k \in \{1, \dots, K\}
\end{equation}
where $k$ is the discrete time index, $K$ is the simulation horizon, $Q=[Q_{B}; Q_{\cal E}]$, $Q_B$ has size $|L||B_S| \times 1$, $Q_{\cal E}$ has size $|{\cal E}_S|\times 1$, the input firing vector $U^-[k]$ has size $|{\cal E}_S|\times 1$, and the output firing vector $U^+[k]$ has size $|{\cal E}_S|\times 1$.  
\begin{align}\label{Eq:ESNSTF1}
Q_{B}[k+1]&=Q_{B}[k]+{M}^+U^+[k]\Delta T-{M}^-U^-[k]\Delta T \\ \label{Eq:ESNSTF2}
Q_{\cal E}[k+1]&=Q_{\cal E}[k]-U^+[k]\Delta T +U^-[k]\Delta T
\end{align}
where $\Delta T$ is the duration of the simulation time step.  
\end{defn}
When the capabilities are assumed to occur instantaneously, $U^+[k]=U^-[k] \forall k$.  Then, Eq. \ref{CH6:eq:PhiCPN} collapses to triviality and Eq. \ref{Eq:ESNSTF2} becomes
\begin{align}\label{Eq:SimpleSTF}
Q_{B}[k+1]&=Q_{B}[k]+{M}U[k]
\Delta T 
\end{align}
Ultimately, the reconciliation of process-based life cycle analysis, model-based systems engineering, and hetero-functional graph theory relies on comparing the mathematics of process-based life cycle analysis with the engineering system net state transition function found in Defn. \ref{Defn:9 ESN-STF}. 

\section{Reconciliation of Process-based Life Cycle Analysis, Model-Based Systems Engineering, \& Hetero-functional Graph Theory}\label{Sec:III AnalysisExample}

While it is clear that process-based life cycle analysis, model-based systems engineering, and hetero-functional graph theory can all model a diversity of complex systems,  it is also clear that they use significantly different terminology that makes it difficult to relate them to each other conceptually. To demonstrate these relationships concretely, the illustrative example first introduced in Figures \ref{fig: Process Flow Diagram}, \ref{fig: BDD to Oil Vehicle Motion} and \ref{fig: Activity Diagram Propel Vehicle from Crude Oil} is discussed from the perspective of process-based life cycle analysis in Sec. \ref{Sec:LCAIntro} and then next from the perspective of hetero-functional graph theory in Sec. \ref{Sec:HFGTIntro}.  

\subsection{Process-Based Life Cycle Analysis}\label{Sec:LCAExample}
To demonstrate process-based life cycle analysis concretely, the process flow diagram shown in Fig. \ref{fig: Process Flow Diagram} is taken as an illustrative example.  It shows that the motion of an internal combustion vehicle (ICV) is produced from the combustion of gasoline, which in turn is produced from crude oil.   Along the way, both processes produce carbon dioxide ($CO_2$) and nitrogen oxides ($NO_X$) emissions. Similarly, electric vehicle motion is produced from the transformation of electricity.  This electricity is generated from refined oil (in this example), and the refined oil is produced from crude oil.  Again, the latter two processes produce carbon dioxide ($CO_2$) and nitrogen oxide ($
NO_X$) emissions.  Next, the data in the process flow diagram shown in Fig. \ref{fig: Process Flow Diagram} is extracted to define the inputs $X=[x_1, \ldots, x_5]^T$, the outputs $Y=[y_1, \ldots, y_5]^T$, and environmental aspects $E=[e_1, \ldots, e_8]^T$
, vectors.
\begin{itemize}
\item $x_1$: How much the process ``Produce Refined Oil from Crude Oil" is executed. 
\item $x_2$: How much the process ``Generate Electricity from Refined Oil" is executed. 
\item $x_3$: How much the process ``Transform Electricity into EV Motion" is executed.
\item $x_4$: How much the process ``Produce the Gasoline from Crude Oil" is executed.  
\item $x_5$: How much the process ``Combust Gasoline to Generate ICV Motion" is executed.  
\item $y_1$: Amount of refined oil produced from Process 1. 
\item $y_2$: Amount of electricity produced from Process 2. 
\item $y_3$: Amount of distance travelled from Process 3. 
\item $y_4$: Amount of distance travelled from Process 5. 
\item $y_5$: Amount of gasoline produced from Process 4. 
\item $e_1$: Amount of $CO_2$ emitted to atmosphere. 
\item $e_2$: Amount of $NO_X$ emitted to atmosphere. 
\item $e_3$: Amount of Crude Oil taken from the Earth.  
\end{itemize}
\noindent Next, the technology matrix $A$ and the environmental matrix $B$ are derived from the arc labels shown in Fig. \ref{fig: Process Flow Diagram}. 

\begin{align}
A =  & \left[\begin{tabular}{p{0.55in}p{0.55in}p{0.55in}p{0.55in}p{0.55in}} 
  1      & -61.9    & 0       & 0       & 0    \\
  0      &  1       & -3.816  & 0       & 0    \\
  0      &  0       &  1      & 0       & 0    \\
  0      &  0       &  0      & 1       & 0    \\
  0      &  0       &  0      & -53.3   & 1 \end{tabular}\right] \\
 B = & \left[\begin{tabular}{p{0.55in}p{0.55in}p{0.55in}p{0.55in}p{0.55in}}  
1.030e-3   & 2.515e-3   & 0       & 2.48e-1   & 2e-4 \\
8.4e-4     & 2.237e-1   & 0       & 2.17e-1   & 5.4e-4  \\
-3.45      & 0          & 0       & 0         & -2.22 \end{tabular}\right]
\end{align}

\noindent Consequently, when $Y=[0, 0, 500, 0, 0]^T$, corresponding to a 500km trip in an EV, Eq. \ref{Equation 3} gives a result of 
\begin{align}
E=\begin{bmatrix}
1.264 \times 10^{2} \text{kg CO}_2 & 5.260 \times 10^{2} \text{g NO}_x & 4.075 \times 10^{5}
 \text{cc Crude Oil}
\end{bmatrix}^T
\end{align}
Similarly, when $Y=[0, 0, 0, 500, 0]^T$, corresponding to a 500km trip in an ICV, Eq. \ref{Equation 3} gives a result of
\begin{align}
E=\begin{bmatrix}
1.293 \times 10^{2} \text{kg CO}_2 & 1.228 \times 10^{2} \text{g NO}_x & 5.916 \times 10^{4} \text{cc Crude Oil}
\end{bmatrix}^T
\end{align}

\subsection{Life Cycle Analysis by Hetero-functional Graph Theory}
\label{Sec 3 (B)}
A similar analysis can be conducted with hetero-functional graph theory when its terminology is reconciled with the terminology used in process-based life cycle analysis. Returning to Def. \ref{Defn:D1 System Operand}, the system operands are the inputs and outputs of all of the processes shown in Fig. \ref{fig: Process Flow Diagram}.

\begin{align}
L &= \left\{
\begin{aligned}
&l_1 : \text{Refined Oil} \\
&l_2 : \text{Electricity} \\
&l_3 : \text{Distance travelled} \\
&l_4 : \text{Gasoline} \\
&l_5 : \text{CO}_2 \text{ Emissions} \\
&l_6 : \text{NO}_X \text{ Emissions} \\
&l_7 : \text{Crude Oil} \\
\end{aligned}
\right\}
\end{align}
Next, the system processes (in Defn. \ref{Defn:D2 System Process}) are all of the processes shown in Fig. \ref{fig: Process Flow Diagram}.
\begin{align}
P &= \left\{
\begin{aligned}
&
P_1 : \text{Produce Refined Oil from Crude Oil} \\
&P_2 : \text{Generate Electricity from Refined Oil} \\
&P_3 : \text{Transform Electricity into EV Motion} \\
&P_4 : \text{Produce Gasoline from Crude Oil} \\
&P_5 : \text{Combust Gasoline to Generate ICV Motion} \\
\end{aligned}
\right\}
\end{align}
Next, it is important to recognize that the system resources (in Defn. \ref{Defn:D3 System Resource}) are not mentioned in Fig. \ref{fig: Process Flow Diagram}.  Instead, they must be identified from the blocks in the block definition diagram in Fig. \ref{fig: BDD to Oil Vehicle Motion} or the swim lanes in Fig. \ref{fig: Activity Diagram Propel Vehicle from Crude Oil}.  In addition to these resources, a ``source" resource called Earth and a ``sink" resource called atmosphere are needed to define the location from which all natural resources are sourced and the location to which all emissions are sent respectively.  Consequently, 
\begin{align}
R &= \left\{
\begin{aligned}
&R_1 : \mbox{Oil Refinery} \\
&R_2 : \mbox{Oil Fired Power Plant} \\
&R_3 : \mbox{Electric Vehicle (EV)} \\
&R_4 : \mbox{Internal Combustion Vehicle (ICV)} \\
&R_5 : \mbox{Atmosphere} \\
&R_6 : \mbox{Earth} \\
\end{aligned}
\right\}
\end{align}
For simplicity of discussion, all of these resources are assumed to be stationary. (This assumption is revisited later in Sec.\ref{Sec:Discussion}). Therefore, the application of Defn. \ref{Defn:D4 Buffer} gives  $B = R$.  Therefore, the system's capabilities (as in Defn. \ref{Defn:D5 Capability}) become:
\begin{align}
{\cal E}_{S} &= \left\{
\begin{aligned}
&{\cal E}_{S_1} : \mbox{Oil Refinery produces Refined Oil from Crude Oil} \\
&{\cal E}_{S_2} : \mbox{Oil Fired Power Plant produces Electricity from Refined Oil} \\
&{\cal E}_{S_3} : \mbox{Electric Vehicle (EV) consumes Electricity for Motion} \\
&{\cal E}_{S_4} : \mbox{Oil Refinery produces Gasoline from Crude Oil} \\
&{\cal E}_{S_5} : \mbox{Internal Combustion Vehicle (ICV) consumes Gasoline for Motion}
\end{aligned}
\right\}
\end{align}

\noindent Next, the hetero-functional incidence tensor $\widetilde{\cal M}_\rho$ is determined from Defn. \ref{Defn:D6 HFIT -ve} and \ref{Defn:D7 HFIT +ve} and then matricized.  The resulting matrix has a size of $|L||B_S|x|{\cal E}_S|$ and its values must be inferred from the process flow diagram in Fig. \ref{fig: Process Flow Diagram}.  Note that because not all combinations of operand and buffer are necessary, the hetero-functional incidence matrix has rows filled entirely with zeros.  When these zero-filled rows are eliminated, the resulting hetero-functional incidence matrix $M$ becomes:
\begin{align}
\begin{tabular}{p{0.55in}p{0.55in}p{0.55in}p{0.55in}p{0.55in}} 
${\cal E}_{S_1}$ & ${\cal E}_{S_2}$ & ${\cal E}_{S_3}$ & ${\cal E}_{S_4}$ & ${\cal E}_{S_5}$ 
\end{tabular} \\
M = \begin{bmatrix}\begin{tabular}{p{0.55in}p{0.55in}p{0.55in}p{0.55in}p{0.55in}}
  1      & -61.9    & 0       & 0       & 0       \\
  0      &  1       & -3.816  & 0       & 0       \\
  0      &  0       & 1       & 0       & 0       \\
  0      &  0       & 0       & 1       & 0       \\
  0      &  0       & 0       & -53.3   & 1       \\
1.030e-3   & 2.515e-3   & 0       & 2.48e-1   & 2e-4 \\
8.4e-4  & 2.237e-1 & 0       & 2.17e-1 & 5.4e-4   \\
-3.45   & 0        & 0       & 0       & -2.22   \\
\end{tabular}\end{bmatrix} & 
\begin{tabular}{l}
$
l_1,b_1$: Refined Oil at Oil Refinery \\
$l_2,b_2$: Electricity in Oil-Fired Power Plant \\
$l_3,b_3$: Distance travelled at Electric Vehicle\\
$l_3,b_4$: Distance travelled at ICV\\
$l_4,b_1$: Gasoline at Oil Refinery\\
$l_5,b_5$: CO$_2$ Emissions at Atmosphere\\
$l_6,b_5$: NO$_x$ Emissions at Atmosphere \\
$l_7,b_6$: Crude Oil at Earth \\
\end{tabular}
\end{align}
Consequently, the Engineering System Net for this system is constructed from Defn. \ref{Defn:8 ESN} and depicted in Fig. \ref{Fig:PetriNet}.  Its state transition function follows from Defn.\ref{Defn:9 ESN-STF}.  The resulting simplified state transition function is stated in Eq. \ref{Eq:SimpleSTF}.  

\begin{figure}[H]
\centering
\includegraphics[width=1.0\linewidth]{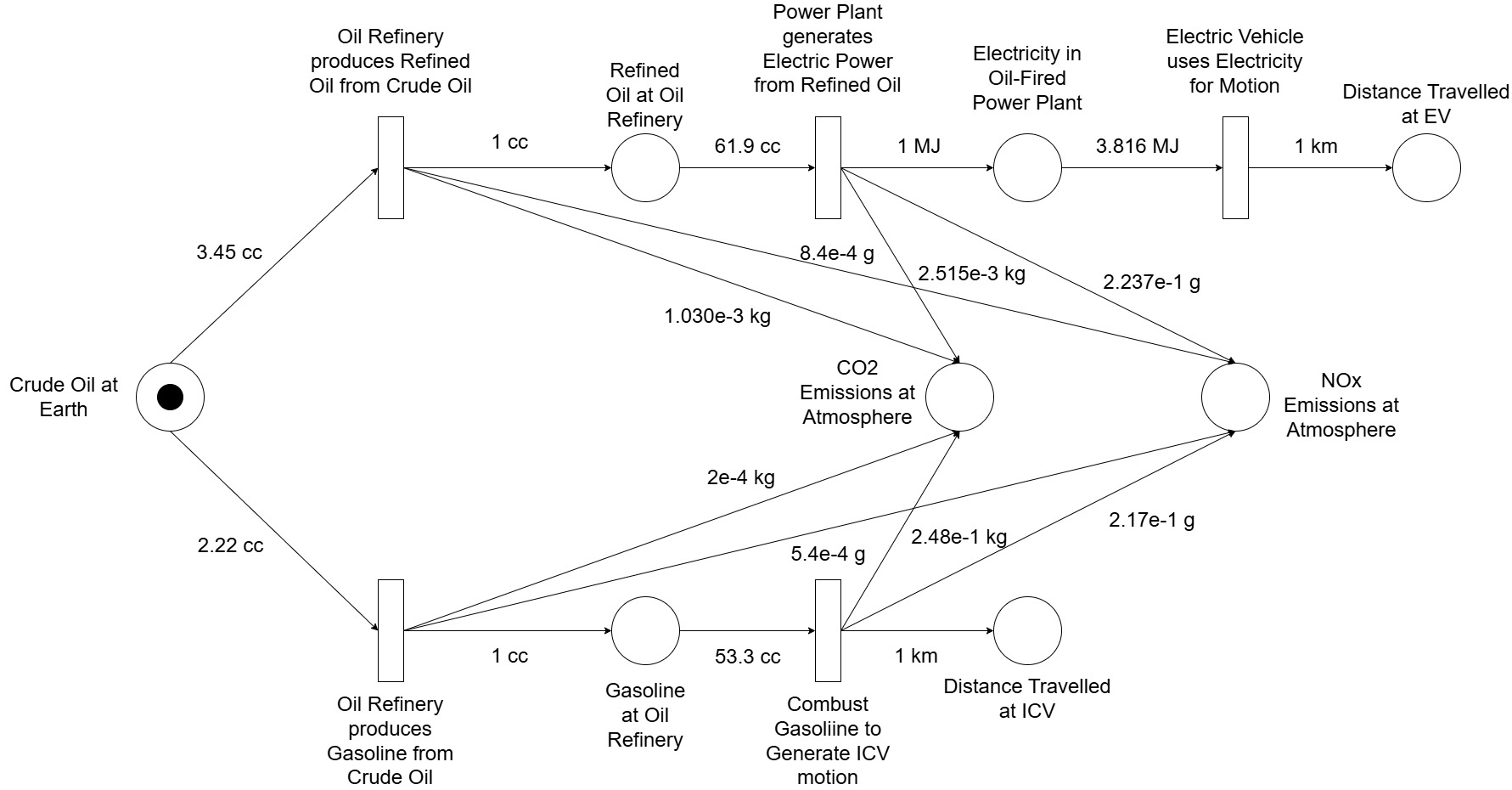}
\caption{Petri Net representing the Life Cycle Analysis (LCA) of the process flow from crude oil from Earth to the distance travelled by Electric Vehicles (EV) and Internal Combustion Vehicles (ICV), along with their respective emissions in the Atmosphere. The diagram illustrates the transformation stages of crude oil into refined oil, electricity, and gasoline, which then powers EV and ICV motion, resulting in specific emissions (CO$_2$, NO$_x$) at each stage.}
\label{Fig:PetriNet}

\end{figure}

To complete the life cycle analysis with hetero-functional graph theory, it is important to recognize that the process-based life cycle analysis only considers a simulation from an initial to a final state (with no time steps in between).  Therefore, the simulation horizon K=2.  Additionally, the simulation time step $\Delta T=1$.  Next, the process-based life cycle analysis only measures the \emph{change} in inputs used, outputs produced, and environmental aspects produced.  Consequently, the change in the marking vector $\Delta Q_{B}$ is also introduced.  
\begin{align}
\Delta Q_{B} = Q_{B}[k=2]- Q_{B}[k=1]
\end{align}
Under these conditions, and as a result, the equivalence between the simplified Engineering System Net state transition function and the process-based life cycle analysis models becomes clear:
\begin{align}\label{Eq:HFGT-PBLCA}
\Delta Q_{B}&={M}U \\\label{Eq:PBLCA}
\begin{bmatrix}
Y \\ E
\end{bmatrix} &= 
\begin{bmatrix}
A \\ B
\end{bmatrix}X
\end{align}
This result can be restated as a formal proof on the relationship between hetero-functional graph theory and process-based life cycle analysis.  
\begin{thm}\label{Thm:PCA_HFG}
Given an arbitrary process flow diagram, the associated life cycle analysis in Eq. \ref{Eq:PBLCA} is a formal special case of the engineering system net state transition function in Eq. \ref{Eq:ESNSTF1} and \ref{Eq:ESNSTF2}. 
\end{thm}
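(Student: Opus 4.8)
The plan is to exhibit a canonical construction that turns an arbitrary process flow diagram into a concrete instance of the engineering system net of Defn.~\ref{Defn:8 ESN}, and then to show that under a precise list of limiting conditions the state transition function of Eqs.~\ref{Eq:ESNSTF1} and~\ref{Eq:ESNSTF2} degenerates exactly into the life cycle analysis relation of Eq.~\ref{Eq:PBLCA}. The construction follows the pattern of the worked example in Sec.~\ref{Sec 3 (B)}: the operand set $L$ is taken to be the union of the distinct products labelling the arcs of the diagram and the distinct environmental aspects; the process set $P$ is taken to be the set of process boxes; and, since a process flow diagram names no resources, one stationary buffer resource is introduced per process so that it executes that process, together with a single ``source'' resource (Earth) and a single ``sink'' resource (atmosphere). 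Because every resource is stationary, Defn.~\ref{Defn:D4 Buffer} gives $B_S = R$, and the capability set $\mathcal{E}_S$ of Defn.~\ref{Defn:D5 Capability} is then in bijection with $P$, so that $|\mathcal{E}_S| = |P|$ and the engineering system net firing vector has the same length as the LCA execution vector $X$.

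First I would establish the incidence-matrix correspondence. Using Defns.~\ref{Defn:D6 HFIT -ve} and~\ref{Defn:D7 HFIT +ve}, I would show that the column of the matricized tensor $M = M^+ - M^-$ associated with a given capability (equivalently with a process $p_j$) records, in the row indexed by the operand--buffer pair at which the corresponding operand is injected or pulled, precisely the arc weight attached to $p_j$ in the diagram, positive for an output arc and negative for an input arc. Since process-based life cycle analysis adopts the same sign convention --- consumed products and consumed environmental aspects enter $A$ and $B$ with negative sign --- the difference $M^+ - M^-$ reproduces exactly the stacked matrix $[A;B]$ once the rows of $M$ indexed by operand--buffer pairs that never occur (the all-zero rows noted in Sec.~\ref{Sec 3 (B)}) are deleted. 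The surviving rows are in bijection with the entries of $Y$ and $E$, so after this reduction $M$ has dimensions $(|Y| + |E|) \times |P|$, matching the stacked matrix; in particular a single operand appearing at two distinct buffers --- distance travelled at the EV versus at the ICV, for instance --- supplies two distinct rows, which is precisely why an LCA ``product'' corresponds to an operand--location pair rather than to a bare operand.

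Next I would carry out the dynamical reduction. Letting the capabilities act instantaneously gives $U^+[k] = U^-[k] =: U[k]$, so that Eq.~\ref{CH6:eq:PhiCPN} becomes vacuous and Eq.~\ref{Eq:ESNSTF2} forces $Q_{\mathcal{E}}[k+1] = Q_{\mathcal{E}}[k]$; the net therefore evolves through Eq.~\ref{Eq:SimpleSTF} alone. Taking the simulation horizon $K = 2$ --- a single transition from an initial to a final state --- and the time step $\Delta T = 1$, exactly one firing $U := U[1]$ occurs and $\Delta Q_{B} := Q_{B}[k=2] - Q_{B}[k=1] = MU$. Identifying the firing vector $U$ with the execution vector $X$ and the change $\Delta Q_{B}$ in the buffer marking with the stacked vector $[Y;E]$ of produced products and environmental aspects, this is exactly Eq.~\ref{Eq:HFGT-PBLCA}, hence Eq.~\ref{Eq:PBLCA}. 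The life cycle analysis is therefore recovered as the all-stationary-buffer, instantaneous-capability, $K = 2$, $\Delta T = 1$ specialization of the engineering system net state transition function, which is the assertion of Theorem~\ref{Thm:PCA_HFG}.

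I expect the principal obstacle to be making the incidence-matrix correspondence watertight for a \emph{genuinely arbitrary} process flow diagram rather than merely for the worked example. This requires (i) formalizing the LCA assumption that each process is assigned a single primary product --- the assumption that makes $A$ square with $|P| = |Y|$ --- as a well-defined rule specifying which operand resides at which buffer, and checking that the rule stays consistent even when intermediate products are shared among several processes or the diagram contains cycles; (ii) showing that the elimination of the all-zero rows of the matricized hetero-functional incidence tensor is unambiguous and leaves exactly the $|Y| + |E|$ rows needed; and (iii) confirming that the $M^+ - M^-$ sign bookkeeping coincides with the LCA sign convention in every case, including environmental aspects that are consumed rather than emitted. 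Once these points are settled, the rest of the argument is the routine substitution sketched above.
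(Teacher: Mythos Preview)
Your proposal is correct and follows essentially the same approach as the paper: impose a one-to-one process-to-resource allocation so that $\mathcal{E}_S$ is in bijection with $P$, take instantaneous capabilities ($U^+[k]=U^-[k]$), set $K=2$ and $\Delta T=1$, and then identify $\Delta Q_B$ with $[Y;E]$, $M$ with $[A;B]$, and $U$ with $X$. The paper's proof is considerably terser --- it simply lists these three assumptions and asserts the identification --- whereas you spell out the canonical construction of the net, the row-elimination step, and the potential pitfalls for a genuinely arbitrary diagram; but the logical skeleton is the same.
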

\begin{proof}
The proof is made directly with the introduction of the following special conditions.  
\begin{enumerate}
\item \textbf{Assumption 1:} Assume that each of the processes $p \in P$ are instantiated and then subsequently allocated exactly to one resource $r \in R$ so that there is a direct one-to-one mapping of each process to the associated capability in $e \in {\cal E}_S$ in hetero-functional graph theory.
\item \textbf{Assumption 2:} Next assume a simulation horizon of K=2. Also assume that the simulation time step $\Delta T$ = 1.
This results in Eq. \ref{Eq:HFGT-PBLCA}.
\item \textbf{Assumption 3:} Assume that all of the capabilities (i.e. transitions) ${\cal E}_S$ of the Engineering System Net State Transition Function occur instantaneously. $U^+[k]=U^-[k] \forall k$.  This results in the simplified state transition function in Eq. \ref{Eq:SimpleSTF}.  
\end{enumerate}
Under these special conditions, Eq. \ref{Eq:HFGT-PBLCA} is equal to Eq. \ref{Eq:PBLCA} when 
\begin
{align}
\Delta Q_{B}=\begin{bmatrix}
Y \\ E
\end{bmatrix}, \quad M = \begin{bmatrix}
A \\ B
\end{bmatrix}, \quad U &= X
\end{align}
\end{proof}

\section{Discussion:  Extending Life Cycle Analysis with MBSE \& HFGT}\label{Sec:Discussion}
Theorem 1 provides an analytical foundation upon which to discuss how model-based systems engineering and hetero-functional graph theory may be used to enhance the spatio-temporal resolution of process-based cycle analysis.  The discussion is organized around relaxing the three assumptions made above.
 
\subsection{Admitting the Allocation of System Processes to Multiple Resource Types \& Instances}

First, and perhaps most revealing, is the potential relaxation of Assumption 1 above so that \( P \neq \mathcal{E}_S \).  MBSE and HFGT make a special point not to conflate the generic set of system processes $P$ (Defn. \ref{Defn:D2 System Process} and specific set of system \emph{capabilities} ${\cal E}_S$ (Defn. \ref{Defn:D5 Capability}).  Here,it is understood that a system process is stated in a ``solution-neutral" manner\cite{Crawley:2015:00} while a system capability is stated in a ``solution-specific" manner.  Said differently, MBSE and HFGT recognize that a given process $p \in P$ can be executed by potentially two different resources, which may or may not be of the same type. 
 For example, in manufacturing, the process ``drill hole" can be executed by one of three different resources: $r_1 =$ drill press, $r_2 = $ milling machine, or $r_3 = $ lathe. Even though each of these resources carries out the same (solution-neutral) ``drill hole" process, the associated life cycle analysis will require different process weight values to account for the specific technology (or solution) that is carrying out that process.  In contrast, the process-based life cycle analysis literature demonstrates some cases \cite{Mathews:2015:00} where the term ``process" refers a solution-neutral ``process" (Defn. \ref{Defn:D2 System Process} and other cases \cite{X} where the term ``process" refers to solution-specific ``capability" (Defn. \ref{Defn:D5 Capability}).  The latter case is specifically required when the life cycle analyst recognizes that the process weight values vary significantly because of differences in underlying technology and/or operating procedures.  And so while ``process-based life cycle analysis" can be conducted on the basis of either (solution-neutral) processes or (solution-specific) capabilities, it is not immediately clear which is being conducted because the term ``process-based" is effectively overloaded.  Meanwhile, MBSE and HFGT avoid this conflation in terminology.  

The distinction between life cycle analyses based on processes versus capabilities is important. Of course, as stated above, the values of the process weights may differ from one capability to another, even though their underlying (solution-neutral) process is specific.  However, the use of capabilities in life cycle analysis brings about further complexities.  Consider the case where the ``drill hole'' process can be carried out by two \emph{identical} drill presses. At first glance, such a situation may appear to give identical life cycle analysis results. In reality, however, the two resource instances must exist in two different locations. Therefore, the life cycle analysis must be expanded to include \emph{transportation} processes that account for the distinct locations of the identical drill presses.  This situation can be elaborated in a more complex example.  Consider, for example, two homes that each use an identical propane-fueled boiler for residential heating. The first is in an urban neighborhood near a propane supply depot. The latter is in a rural neighborhood and is far from the propane supply depot. The life cycle analysis for 1BTU of heating will be different because the propane truck required to supply the two homes would have to drive further to the rural home than to the urban home.  This geographical effect is further exacerbated when considering that the (solution-neutral) transportation processes can have different process weights as solution-specific transportation capabilities.  Consider, for example, the transport of propane can occur via pipeline, rail, truck, or even an electrified truck; all with completely different process weights.  Not only is the life cycle analysis affected by the \emph{location} of the capabilities, but it is also affected by the specific technology carrying out the transportation processes between these locations.  

Many process-based life cycle analyses include only energy and matter \emph{conversion} capabilities and neglect the energy and matter \emph{transportation} capabilities.  Such an assumption may or may not be sound because the usage of capabilities \emph{necessitates} a specific geographic region where the capabilities may be either densely or sparsely situated.  It also necessitates an understanding of the technologies being used to carry out the transportation processes between the energy conversion capabilities.  Again, MBSE and HFGT make the role of transportation capabilities explicit.  Eq. \ref{Eq:HFGT-PBLCA} is revised to become:  

\begin{align}\label{Eq:HFGT-PBLCA2}
\Delta Q_{B}&=\begin{bmatrix}
M_{conversion} & M_{transportation}
\end{bmatrix}\begin{bmatrix}
U_{conversion} \\
U_{transportation}
\end{bmatrix}
\end{align}
Consequently, it is straightforward to evaluate whether the assumption
\begin{align}\label{Eq:Assumption}
M_{conversion}U_
{conversion} >> M_{transportation}U_{transportation}
\end{align}
Is indeed valid, or if transportation capabilities must be included.  Furthermore, because MBSE and HFGT provide an \emph{automated} means of constructing the heterofunctional incidence matrices $M_{conversion}$ and $M_{transportation}$, these matrices can be rapidly calculated for distinct geographies, and the assumption in Eq. \ref{Eq:Assumption} can be rapidly evaluated.  In the end, it is worthwhile knowing whether the results of the life cycle analysis are dominated by the process weights of the energy conversion processes, the specific geographic situation of these processes, or the technologies used by the transportation capabilities.   

\subsection{Admitting a Long Simulation Horizon $K > 2$}
It is also important to recognize that time itself can have an important role to play in life cycle analysis. For example, the carbon intensity of 1kW of electricity from the grid will vary tremendously from one moment to the next as the availability of solar irradiance and wind speed varies. Similarly, electricity markets continually revise the mix of coal, oil, and natural gas-fired power plants over the course of the day. Finally, if electricity is stored in a battery then carbon-free solar and wind energy can be captured during peak periods and then deployed to meet demand at another time.  Again, process-based life cycle analysis does not normally include the effect of storage process and/or capabilities.  In any of these conditions, a \emph{dynamic} life cycle analysis is required to understand life cycle aspects over the course of a simulation horizon $K > 2$. Such an analysis can be readily completed with the entire engineering system net state transition function in Eq. \ref{Eq:ESNSTF1} and \ref{Eq:ESNSTF2} rather than process-based life cycle analysis in Eq. \ref{Eq:PBLCA}.

\subsection{Admitting Processes with Varying Durations}
The previous subsection revealed that the life cycle aspects associated with a given operand do not just vary in space, but they also vary in time. Consequently, \textit{when} a process finishes matters. Consider, for example, an industrial site composed of a water pump, a water tank, and an electrolyzer. The industrial site is considering two water pump designs: an expensive one that fills the tank in 3 hours and a cheaper one that fills the tank in 6 hours. Naturally, the electrolyzer wishes to operate at times when the grid’s electricity is dominated by solar and wind generation. However, if the water tank is not filled fast enough, then the electrolyzer might be forced to operate at times when the electricity is dominated by carbon-intensive fuels rather than renewable energy resources.  Again, such an analysis can be readily completed with the entire engineering system net state transition function in Eq. \ref{Eq:ESNSTF1} and \ref{Eq:ESNSTF2} rather than process-based life cycle analysis in Eq. \ref{Eq:PBLCA}.

\section{Conclusion \& Future Work}\label{Sec:Conclusion}
This paper reconciles process-based life cycle analysis with the broader literature and vocabulary of model-based systems engineering and hetero-functional graph theory. In so doing, it facilitates the integration of the rich life cycle analysis literature into the design of complex systems more generally. Consequently, environmental sustainability is no longer an isolated consideration but rather one that is an inherent part of a system’s architecture and design. This reconciliation reveals the specific conditions under which model-based systems engineering (MBSE) and hetero-functional graph theory (HFGT) are equivalent to a life cycle analysis. Furthermore, MBSE and HFGT make a special point to distinguish between solution-neutral processes and solution-specific capabilities; which is not always apparent in the process-based life cycle analysis literature.  Ultimately, his paper proves that model-based systems engineering and hetero-functional graph theory are a formal generalization of process-based life cycle analysis. Consequently, this paper discusses how a life cycle analysis methodology can be extended.  In particular, it shows how MBSE and HFGT address systems that a.) contain processes with a diversity of processing times, b.) exhibit dynamic behavior over a long simulation horizon, c.) store physical quantities, d.) exhibit an explicit spatial distribution, and e.) utilize a diversity of transportation technologies to overcome that spatial distribution.   In all, the paper demonstrates how model-based systems engineering and hetero-functional graph theory may be used to enhance the spatio-temporal resolution of process-based life cycle analyses in a manner that aligns sustainability goals with system design objectives.  In future work, MBSE and HFGT are used to conduct spatio-temporal life cycle analyses on real-life case studies.

\section{Acknowledgement}
The authors are grateful for the gracious financial support of the National Energy Analysis Centre at the Commonwealth Scientific and Industrial Research Organisation in Australia.

\bibliographystyle{IEEEtran}
\bibliography{LIINESLibrary,LIINESPublications,LG_BG_HFGT_References,NirajReferences}

\end{document}